\newcommand{\Fig}[1]{Fig.~\ref{#1}}
\newtheorem{lemma}{Lemma}
\newtheorem{theorem}{Theorem}
\newtheorem{remark}{Remark}
\newtheorem{statement}{Statement}
\begin{document}

\sloppy

\title{On the Capacity of the Multiuser Vector Adder Channel}

\author{
  \IEEEauthorblockN{Alexey Frolov, Pavel Rybin and Victor Zyablov}
	
  \IEEEauthorblockA{\small Inst. for Information Transmission Problems\\
    Russian Academy of Sciences\\Moscow, Russia\\
    Email: \{alexey.frolov, prybin, zyablov\}@iitp.ru
  }
}




\maketitle
\begin{abstract}
We investigate the capacity of the $Q$-frequency $S$-user vector adder channel (channel with intensity information) introduced by Chang and Wolf.  Both coordinated and uncoordinated types of transmission are considered. Asymptotic (under the conditions $Q \to \infty$, $S = \gamma Q$ and $0 < \gamma < \infty$) upper and lower bounds on the relative (per subchannel) capacity are derived. The lower bound for the coordinated case is shown to increase when $\gamma$ grows. At the same time the relative capacity for the uncoordinated case is upper bounded by a constant.
\end{abstract}

\section{Introduction}
In \cite{CW} two multiuser channel models were introduced: the A-channel (or the channel without intensity information) and the B-channel (or the channel with intensity information). The capacity of the A-channel was investigated in \cite{CW, BP} for the case of coordinated transmission and in \cite{WZ, G, VK, VKK, GVC} for the case of uncoordinated transmission (the terminology is from \cite{CHV, V}). Note that the A-channel is in fact a vector disjunctive channel (OR channel) \cite{OFZ_2012, FZSF_2013}.

In this paper we investigate the capacity of the B-channel. The B-channel is a noiseless multiuser vector adder channel. Let us denote the number of active users by $S$, $S \ge 2$. For a certain time instant $\tau$ the channel inputs are binary vectors ${\bf{x}}_i^{(\tau)}, \: i = 1,2,\ldots, S$, of length $Q$ (the number of frequencies or subchannels) and of weight $1$ and the channel output at time instant $\tau$ is given by an elementwise sum of vectors at input
$$
{\bf y}^{(\tau)} = {\sum\limits_{i = 1}^{S}}{\bf x}^{(\tau)}_i.
$$
Note that the elements are added as real numbers.

The capacity of the B-channel for the coordinated case was investigated in \cite{CW} when $Q$ is fixed and $S \to \infty$. In this paper we are interested in the following asymptotics: $Q \to \infty$, $S = \gamma Q$ ($0 < \gamma < \infty$). If we take the limit as $Q \to \infty$, then the result of \cite{CW} corresponds to the case $\gamma \to \infty$. We also investigate the asymptotic capacity of the B-channel for the uncoordinated transmission, i.e. the type of transmission in which a user transmits the information independently of other users. This fact allows us to consider another users as noise. An uncoordinated transmission is preferable for
high-rate applications where a joint decoding is not possible for the complexity reasons.

Our contribution is as follows. Asymptotic (under the conditions $Q \to \infty$, $S = \gamma Q$ and $0 < \gamma < \infty$) upper and lower bounds on the relative (per subchannel) capacity are derived. The lower bound on the relative capacity for the coordinated case is shown to increase when $\gamma$ grows. At the same time the relative capacity for the uncoordinated case is upper bounded by a constant. The comparison with the result for the A-channel is done.

\section{Coordinated transmission}

Let us consider the case of coordinated transmission first. An example of a multiple-access system with coordinated transmission for a binary adder channel is given in \cite{ChWe}. Uniquely decodable codes are the major element of the system. Note that the system requires symbol and block synchronizations.

Let us denote by $X_i$ a vector sent by the $i$-th user ($i = 1, \ldots, S$) at a certain time instant, by $Y$ we denote the output of the channel at the time instant. The capacity (sum capacity) of the channel $C_{\text{c}}$ for the coordinated transmission is defined as follows
\begin{eqnarray*}
C_{\text{c}}(Q,S) &=& \max \left\{ I(X_1, X_2, \ldots, X_S; Y)\right\} \\
                  &=& \max \left\{ H(Y) - H(Y|X_1, X_2, \ldots, X_S)\right\} \\ 
                  &=& \max \left\{ H(Y)\right\},
\end{eqnarray*} 
where $H(X)$ is the binary entropy of a random variable, the maximum is taken over all possible {\it independent} distributions of random variables $X_1, X_2, \ldots, X_S$.

Since only the vectors of length $Q$ with the sum of elements equal to $S$ may be the channel outputs, then
\begin{equation}\label{C_coord_nonasympt_upper}
C_{\text{c}}(Q,S) \leq C^{\text{(U)}}_{\text{c}}(Q,S) = \log_2 \binom{S+Q-1}{S}, 
\end{equation}
as the number of such vectors is equal to $\binom{S+Q-1}{S}$.

In what follows we are interested in such an asymptotics: $Q \to \infty$, $S = \gamma Q$ ($0 < \gamma < \infty$). Let us introduce the notation of the asymptotic relative capacity
\[
{c_{\text{c}}(\gamma)} = \mathop{\lim}  \limits_{Q \to \infty} \left\{ C_{\text{c}}(Q, \gamma Q)/Q\right\}.
\]
The existence of the limit and the convexity of the function ${c_{\text{c}}(\gamma)}$ can be easily proved by corresponding frequency division (see \cite{BP}).

From (\ref{C_coord_nonasympt_upper}) we obtain
\[
{c_{\text{c}}(\gamma)} \leq c^{\text{(U)}}_{\text{c}}(\gamma) = (\gamma+1)\log_2(\gamma+1) - \gamma \log_2(\gamma).
\]

Now we derive a lower bound $c^{\text{(L)}}_{\text{c}}(\gamma)$. In \cite{CW} a formula for the entropy of the distribution at output $H(Y)$ is obtained when the all variables $X_1, X_2, \ldots, X_S$ are distributed uniformly, i.e.
\[
P(X_i = j) = \frac{1}{Q}, \quad i = 1, \ldots, S, \: j = 1, \ldots, Q,
\]
and an asymptotics of the quantity if found when $Q$ is fixed and $S \to \infty$. If we take the limit as $Q \to \infty$, we obtain that when $\gamma \to \infty$
\[
c^{\text{(L)}}_{\text{c}}(\gamma) \sim \frac{1}{2} \log_2 (2 \pi e \gamma).
\]

\begin{remark}
Here and in what follows by $P(X_i = j)$ we mean $P(X_i = \mathbf{x}_j)$, where $\mathbf{x}_j$ is a binary vector of length $Q$ with a single unit in the $j$-th position \textup(the positions are enumerated from $1$ to $Q$\textup).
\end{remark}

Let us consider the case when $\gamma$ is finite.

\begin{theorem}
Let $0 < \gamma < \infty$, then 
\[
{c_{\text{c}}(\gamma)} \geq c^{\text{(L)}}_{\text{c}}(\gamma) = \sum\limits_{i=0}^{\infty} \left[ \frac{\gamma^i}{i!} e^{-\gamma} \log_2(i!) \right] - \gamma \log_2 \left( \frac{\gamma}{e} \right).
\]
\end{theorem}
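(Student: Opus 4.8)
The plan is to obtain the lower bound by evaluating the achievable rate $H(Y)$ for the uniform input distribution $P(X_i = j) = 1/Q$ and then computing its asymptotics directly, rather than reusing the $\gamma \to \infty$ expansion from \cite{CW}. Since $c_{\text{c}}(\gamma) = \lim_Q C_{\text{c}}(Q,\gamma Q)/Q \geq \lim_Q H(Y)/Q$ for any admissible choice of input distributions, it suffices to show that for the uniform choice one has $H(Y)/Q \to \sum_{i\ge 0} \tfrac{\gamma^i}{i!}e^{-\gamma}\log_2(i!) - \gamma\log_2(\gamma/e)$.

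First I would describe the output distribution explicitly. With each of the $S$ users independently and uniformly picking one of the $Q$ coordinates, the output vector $Y = (Y_1,\dots,Y_Q)$ records the occupancy counts, so $(Y_1,\dots,Y_Q)$ is multinomial with $S$ trials and equal cell probabilities $1/Q$. Hence
\[
H(Y) = -\sum_{\substack{k_1+\cdots+k_Q = S \\ k_j \ge 0}} \binom{S}{k_1,\dots,k_Q}\frac{1}{Q^S}\log_2\!\left(\binom{S}{k_1,\dots,k_Q}\frac{1}{Q^S}\right).
\]
Writing $\binom{S}{k_1,\dots,k_Q} = S!/\prod_j k_j!$, this splits into three pieces: the constant term $S\log_2 Q$, the term $-\log_2 S!$, and the ``local'' term $\sum_j \mathbb{E}[\log_2(Y_j!)]$, where by symmetry $\mathbb{E}[\log_2(Y_j!)] = \mathbb{E}[\log_2(Y_1!)]$ and $Y_1 \sim \mathrm{Binomial}(S, 1/Q)$. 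So $H(Y) = S\log_2 Q - \log_2 S! + Q\,\mathbb{E}[\log_2(Y_1!)]$.

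Next I would take asymptotics with $S = \gamma Q$, $Q \to \infty$. For the second term, Stirling gives $\log_2 S! = S\log_2 S - S\log_2 e + O(\log S) = \gamma Q(\log_2(\gamma Q) - \log_2 e) + O(\log Q)$, so $S\log_2 Q - \log_2 S! = -\gamma Q\log_2\gamma + \gamma Q\log_2 e + O(\log Q) = -\gamma Q\log_2(\gamma/e) + o(Q)$. For the third term, $Y_1 \sim \mathrm{Binomial}(\gamma Q, 1/Q)$ converges in distribution to a Poisson random variable with mean $\gamma$, with $P(Y_1 = i) \to \tfrac{\gamma^i}{i!}e^{-\gamma}$ for each fixed $i$. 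Thus, modulo a uniform-integrability argument, $\mathbb{E}[\log_2(Y_1!)] \to \sum_{i=0}^\infty \tfrac{\gamma^i}{i!}e^{-\gamma}\log_2(i!)$. Dividing by $Q$ and assembling the three contributions yields exactly $c^{\text{(L)}}_{\text{c}}(\gamma)$.

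The main obstacle is the interchange of limit and expectation in $\mathbb{E}[\log_2(Y_1!)]$: pointwise convergence of the binomial to the Poisson law is elementary, but $\log_2(Y_1!)$ grows like $Y_1\log_2 Y_1$ and $Y_1$ can be as large as $S = \gamma Q$, so one needs a domination or tail estimate to rule out a contribution from the upper tail of the binomial. I would handle this by bounding $\log_2(k!) \le k\log_2 k \le k\log_2 S$ and using standard Chernoff bounds on $P(Y_1 \ge t)$ for $t$ growing slowly with $Q$ to show $\mathbb{E}[\log_2(Y_1!)\,\mathbf{1}\{Y_1 \ge t\}] = o(1)$; alternatively one can invoke a known moment-convergence theorem for the binomial-to-Poisson limit. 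A minor secondary point is that the theorem claims the bound for all $Q$ via the limit $c_{\text{c}}(\gamma)$, so I would also note that the existence of this limit (already granted in the excerpt by the frequency-division/subadditivity argument of \cite{BP}) is what lets us pass from the per-$Q$ computation to the asymptotic statement.
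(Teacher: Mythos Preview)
Your proposal is correct and follows essentially the same route as the paper: uniform inputs, the multinomial output law, the decomposition $H(Y)=Q\,\mathbb{E}[\log_2(Y_1!)]-\log_2(S!/Q^S)$ (the paper obtains the reduction to a single coordinate via its Lemma~\ref{lemma_sum_multinomial}, which is exactly your symmetry/marginal-is-binomial observation), and then the limit $Q\to\infty$. If anything, you are more explicit than the paper about the asymptotic step---the paper simply asserts the limit, whereas you spell out Stirling for $\log_2 S!$ and the Poisson limit (with the needed tail control) for $\mathbb{E}[\log_2(Y_1!)]$.
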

\begin{proof}
Let all the users use uniform distributions at input, then the probability to obtain the vector $\mathbf{y} = (y_1, y_2, \ldots, y_Q)$ at output of the channel can be calculated as follows
\begin{eqnarray*}
p( \mathbf{y} ) &=&  \binom{S}{y_1,y_2, \ldots, y_Q}  \left( \frac{1}{Q} \right)^S \\
                        &=& \frac{S!}{y_1! y_2!  \ldots  y_Q!}  \left( \frac{1}{Q} \right)^S.
\end{eqnarray*}

Thus,
\begin{flalign*}
&C^{\text{(L)}}_{\text{c}}(Q,S) = H(Y) = -\sum\limits_{\mathbf{y}} \left[ p(\mathbf{y}) \log_2 p(\mathbf{y})\right] \\
                                           &= -\sum\limits_{\mathbf{y}} \left[ p(\mathbf{y}) \log_2 \left( \binom{S}{y_1,y_2, \ldots, y_Q}  \left( \frac{1}{Q} \right)^S \right) \right] \\
                                           &= \sum\limits_{\mathbf{y}} \left[ p(\mathbf{y}) \sum\limits_{j=1}^{Q} \left\{\log_2 \left( y_j! \right)\right\}\right]  - \log_2\left( \frac{S!}{Q^S} \right)\\
                                           &= \sum\limits_{j=1}^{Q} \left\{ \sum\limits_{\mathbf{y}} \left[ p(\mathbf{y}) \log_2 \left( y_j! \right)\right] \right\} - \log_2\left( \frac{S!}{Q^S} \right)  \\
                                           &= Q \sum\limits_{i=0}^{S} \left[ \binom{S}{i} \left( \frac{1}{Q}\right)^i \left( 1-\frac{1}{Q}\right)^{S-i} \log_2 \left( i! \right)\right] \\
                                           &- \log_2\left( \frac{S!}{Q^S} \right),                                          
\end{flalign*}
the last transition is done in accordance to Lemma~\ref{lemma_sum_multinomial} (see the appendix).

After dividing by $Q$ and taking the limit as $Q \to \infty$ we obtain the needed result. 
\end{proof}

In \Fig{fig:coord} the derived bounds $c^{\text{(L)}}_{\text{c}}(\gamma)$ and $c^{\text{(U)}}_{\text{c}}(\gamma)$ are shown. For the comparison we also added the lower bound on the relative capacity $c^{\text{(disj)}}_{\text{c}}(\gamma)$ for the disjunctive channel (A-channel from \cite{CW}). The last bound was derived in \cite{BP}.

\begin{figure}[t]
\centering
\includegraphics[width=0.48\textwidth]{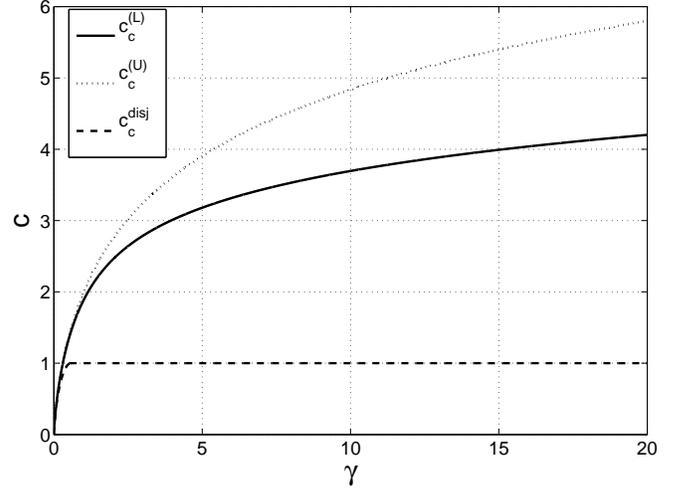}
\caption{Bounds on $c_{\text{c}}(\gamma)$}
\label{fig:coord}
\end{figure}

\section{Uncoordinated transmission}
Let us consider an uncoordinated transmission, i.e. the type of transmission where another users are considered as noise. The use of an uncoordinated transmission is preferable in the multiple-access systems with large number of active users with strict requirements to the transmission rate. An example of a multiple-access system with uncoordinated transmission for a disjunctive channel (OR channel) is given in \cite{CHV} and for a vector disjunctive channel in \cite{OFZ_2012, FZSF_2013}. Note that block synchronization is no more required.

In what follows we only consider the case when all the users use the same distributions of input symbols, i.e.
\[
P(X_i = j) = p_j, \quad i = 1, \ldots, S, \: j = 1, \ldots, Q.
\]
Note that this constraint is very natural for the uncoordinated transmission. 

The single-user capacity $C_i$ for the $i$-th user can be calculated as follows
\[
C_i(Q, S) = \max \left\{ I(X_i; Y) \right\},
\]
where the maximum is taken over all the distributions $X_i$.

Since all the users are ``equal'', then the capacity (sum capacity) $C_{\text{uc}}$ for the uncoordinated case can be calculated as a sum of single-user capacities
\[
C_{\text{uc}}(Q, S) = \sum\limits_{i=1}^{S} {C_i(Q, S)} = S \max\limits_{p_1, p_2, \ldots, p_Q} \left\{ I(X; Y) \right\}.
\]
In the last equality we used the fact that all the users use the same input distributions. For the same reason we dropped out the index $i$ in the notation of input $X$.

Analogously to the case of coordinated transmission we introduce the notation ($Q \to \infty$, $S = \gamma Q$)
\[
{c_{\text{uc}}(\gamma)} = \mathop{\lim}  \limits_{Q \to \infty} \left\{ C_{\text{uc}}(Q, \gamma Q)/Q\right\}.
\]
The proofs of the existence of the limit and of the convexity of the function ${c_{\text{uc}}(\gamma)}$ are little bit different here as all the users use the same distributions. We omit the proofs here.

\subsection{Upper bound}
It is clear that $C_{\text{uc}}(Q, S) \leq C_{\text{c}}(Q,S)$, then
\begin{equation} \label{upper_C_coord}
C_{\text{uc}}(Q, S) \leq \log_2 \binom{S+Q-1}{S}.
\end{equation}

Now we derive a stronger bound for large number of users.

\begin{theorem}\label{theorem_uc_upper}
The inequality holds
\[
C_{\text{uc}}(Q, S) \leq (Q-1) \log_2 e = (Q-1) 1.4427...
\]
\end{theorem}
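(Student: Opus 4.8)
The plan is to bound the single-user mutual information $I(X;Y)$ and then multiply by $S$. Write $Y = X + Z$, where $Z = \sum_{i \neq j} X_i$ is the contribution of the other $S-1$ users, which is independent of $X$ since transmission is uncoordinated. Then $I(X;Y) = H(Y) - H(Y \mid X) = H(X+Z) - H(Z)$, because for fixed $X = \mathbf{x}_k$ the vector $Y$ is just $Z$ shifted deterministically, so $H(Y \mid X) = H(Z)$. Hence $C_{\text{uc}}(Q,S) = S\bigl(H(X+Z) - H(Z)\bigr)$ maximized over the common input distribution $(p_1,\dots,p_Q)$, and the task reduces to controlling the increment in entropy caused by adding one more user's weight-one vector to the sum of $S-1$ such vectors.

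Next I would get a handle on that increment. The random vector $X$ places a single unit in coordinate $k$ with probability $p_k$, so $X + Z$ differs from $Z$ only in that one coordinate is incremented. The key observation is that $H(X+Z) - H(Z) = \sum_k p_k \bigl[H(Z + \mathbf{x}_k) - H(Z)\bigr] + (\text{a mutual-information correction})$; more cleanly, one can use $I(X;Y) \le H(X) \le \log_2 Q$ as a crude bound, but that only gives $S\log_2 Q$, which is too weak. To get the claimed $(Q-1)\log_2 e$ I expect one needs the reverse viewpoint: the channel from $X$ to $Y$ is such that $Y$ reveals at most ``which coordinate got the extra increment,'' and the per-coordinate information is a Poisson-type quantity. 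Concretely, conditioning on the multiset of the other $S-1$ users' positions, the coordinate counts are multinomially distributed, and adding one unit to a coordinate whose current count is $m$ changes the log-probability in a way governed by $\log_2(m+1)$. Summing $p_k$ against the expected value of such terms and comparing with $H(Z)$, the multinomial/Poisson bookkeeping should collapse to $\sum_{k} p_k \cdot (\text{something} \le \log_2 e)$, giving $I(X;Y) \le \tfrac{Q-1}{S}\log_2 e$ after accounting for the one linear constraint $\sum_k p_k = 1$ that removes one degree of freedom.

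An alternative and probably cleaner route: bound $I(X;Y) \le I(X; Y, \text{everything}) $ is useless, so instead use $S \cdot I(X;Y) \le I(X_1,\dots,X_S; Y) = H(Y) \le \log_2\binom{S+Q-1}{S}$ from \eqref{upper_C_coord} — but that is exactly the weak bound we want to beat, so it cannot be the argument either. The genuine gain must come from the fact that in the uncoordinated setting the $S$ single-user informations do \emph{not} add up to the joint information unless the users are ``resolvable,'' and for the adder channel each user is hidden inside a sum of $S$ indistinguishable contributions. I would make this precise by writing $I(X;Y) = \mathbb{E}_Y\bigl[D\bigl(p(X \mid Y)\,\|\,p(X)\bigr)\bigr]$ and bounding the posterior: given $Y = \mathbf{y}$ with coordinate counts $y_1,\dots,y_Q$, the posterior probability that user $1$ sat in coordinate $k$ is proportional to $p_k \cdot y_k / S$ (the chance that one of the $y_k$ units there is user $1$'s). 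This posterior is close to the prior, and the KL divergence is $O(1/S)$ per user uniformly; tracking the constant and summing over the $Q-1$ free coordinates yields $\log_2 e$.

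The main obstacle I anticipate is getting the \emph{exact} constant $\log_2 e$ rather than merely $O(1)$: the inequality $\ln(1+t) \le t$ (equivalently $x \ln x \ge x - 1$, or a concavity/Jensen step on $\binom{S}{i}$-weighted terms) is what will have to be invoked at precisely the right place so that the bound is tight in the $\gamma \to \infty$ limit and matches the factor $e$ appearing in the coordinated lower bound of Theorem 1. Arranging the telescoping sum over coordinates so that exactly $Q-1$ (not $Q$) terms survive — i.e., correctly using the constraint $\sum_k p_k = 1$ or equivalently $\sum_k y_k = S$ — is the delicate bookkeeping step; everything else is routine manipulation of multinomial coefficients and logarithms.
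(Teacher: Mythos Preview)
Your final two paragraphs land on exactly the paper's argument. The paper computes the likelihood ratio directly as $p(\mathbf{y}\mid\mathbf{x}_k)/p(\mathbf{y})=y_k/(Sp_k)$, reduces the multinomial sum to a binomial one (their Lemma~\ref{lemma_sum_multinomial}) to get
\[
I(X;Y)=\sum_{j=1}^{Q}\sum_{i=0}^{S-1} p_j\binom{S-1}{i}p_j^{\,i}(1-p_j)^{S-1-i}\log_2\!\frac{i+1}{Sp_j},
\]
applies $\ln(1+x)\le x$ with $x=(i+1-Sp_j)/(Sp_j)$, and then the inner expectation collapses to $(S-1)p_j+1-Sp_j=1-p_j$, so that $S\,I(X;Y)\le \log_2 e\sum_{j}(1-p_j)=(Q-1)\log_2 e$. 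This is precisely the ``posterior close to prior, KL of order $1/S$, constraint $\sum_k p_k=1$ kills one term'' mechanism you sketch, and your guess that $\ln(1+t)\le t$ is the decisive inequality is correct.

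One slip to fix before you execute: your posterior is wrong. From $p(\mathbf{y}\mid\mathbf{x}_k)/p(\mathbf{y})=y_k/(Sp_k)$, Bayes gives $p(X=\mathbf{x}_k\mid\mathbf{y})=p_k\cdot y_k/(Sp_k)=y_k/S$, \emph{not} something proportional to $p_k y_k$. With the correct posterior your KL computation reproduces the paper's displayed formula verbatim; with the stated one it does not, and the clean collapse to $1-p_j$ is lost. The earlier exploratory routes (the entropy increment $H(X+Z)-H(Z)$ without further structure, the bound $I(X;Y)\le H(X)$, and chaining to the coordinated bound) are rightly abandoned---none of them produces the sharp constant $\log_2 e$.
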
 
\begin{proof}
Note that
\[
p( \mathbf{y} ) =  \binom{S}{y_1,y_2, \ldots, y_Q}  p_1^{y_1} p_2^{y_2}  \ldots p_Q^{y_Q},
\]
\begin{flalign*}
&p( \mathbf{y} | \mathbf{x}_i ) =\\
&\left\{ {\begin{array}{rc}
  \binom{S-1}{y_1, \ldots, y_i -1, \ldots, y_Q}  p_1^{y_1} \ldots p_i^{y_i-1} \ldots p_Q^{y_Q}, & \: {\bf y}_i > 0, \\ 
  0,      & \: \text{otherwise.} 
\end{array}} \right.
\end{flalign*}

Thus,
\[
\frac{p( \mathbf{y} | \mathbf{x}_i )}{p( \mathbf{y} )} = \frac{y_i}{Sp_i}.
\]

\begin{flalign}\label{C_calc}
&I(X;Y) = \sum\limits_{\mathbf{x}} \sum\limits_{\mathbf{y}} \left[ p(\mathbf{x},\mathbf{y}) \log_2 \left( \frac{p( \mathbf{y} | \mathbf{x} )}{p( \mathbf{y} )} \right) \right]  \nonumber \\
&= \sum\limits_{j=1}^{Q} \sum\limits_{\mathbf{y}} \left[ p_j\binom{S-1}{y_1, \ldots, y_j -1, \ldots, y_Q} \right. \times \nonumber \\
&\times \left. p_1^{y_1} \ldots p_j^{y_j-1} \ldots p_Q^{y_Q} \log_2 \left( \frac{y_j}{Sp_j} \right)\right]  \nonumber \\
&= \sum\limits_{j=1}^{Q} \sum\limits_{i=0}^{S-1}\left[ p_j \binom{S-1}{i}p_j^i (1-p_j)^{S-1-i} \right. \times \nonumber \\
&\times \left. \log_2 \left( \frac{i+1}{Sp_j} \right) \right],
\end{flalign}
the last transition is done in accordance to Lemma~\ref{lemma_sum_multinomial} (see the appendix).

Applying the inequality
\[
\ln(1+x) \leq x,
\]
we obtain
\begin{flalign*}
&S I(X;Y)\\
&\leq \log_2e \sum\limits_{j=1}^{Q} \sum\limits_{i=0}^{S-1}\left[ \binom{S-1}{i} p_j^i (1-p_j)^{S-1-i} (i+1-Sp_j) \right] \\
&= \log_2e \sum\limits_{j=1}^{Q} \left[ (S-1)p_j+1-Sp_j) \right] = (Q-1) \log_2e,
\end{flalign*}
this completes the proof.
\end{proof}

From (\ref{upper_C_coord}) and Theorem~\ref{theorem_uc_upper} we obtain such an upper bound
\begin{eqnarray*}
{c_{\text{uc}}(\gamma)} &\leq&  c_{\text{uc}}^{\text{(U)}}(\gamma) \\
                                      &=& \min\left\{ (\gamma+1)\log_2(\gamma+1) - \gamma\log_2\gamma, \log_2 e \right\}.
\end{eqnarray*}

\begin{remark}
Sure the derived upper bound is not tight and can be improved. But already this rough bound shows that the quantity ${c_{\text{uc}}(\gamma)} $ is upper bounded by a constant.  
\end{remark}

\subsection{Lower bound}
In this section using several input distributions we obtain a lower bound on ${c_{\text{uc}}}$.

\subsubsection{Uniform distribution}

Let $p_1 = p_2 = \ldots = p_Q = 1/Q$.

\begin{statement}
Let $0 < \gamma < \infty$, then 
\[
{c_{\text{uc}}(\gamma)} \geq  c_{\text{uc}}^\text{unif}(\gamma) = \gamma \sum\limits_{i=0}^{\infty} {\frac{\gamma^i}{i!} e^{-\gamma} \log_2 \left(\frac{i+1}{\gamma}\right)}.
\]
\end{statement}
\begin{proof}
After substituting of the uniform distribution for (\ref{C_calc}) we obtain
\begin{flalign*}
&C_{\text{uc}}^\text{unif}(Q,S) \\
&= S \sum\limits_{i=0}^{S-1}\left[ \binom{S-1}{i} \left(\frac{1}{Q}\right)^i \left(1-\frac{1}{Q}\right)^{S-1-i} \log_2 \left( \frac{i+1}{\gamma} \right) \right].
\end{flalign*}

After dividing on $Q$ and taking the limit as $Q \to \infty$ we obtain the needed result. 
\end{proof}

\begin{figure}[t]
\centering
\includegraphics[width=0.48\textwidth]{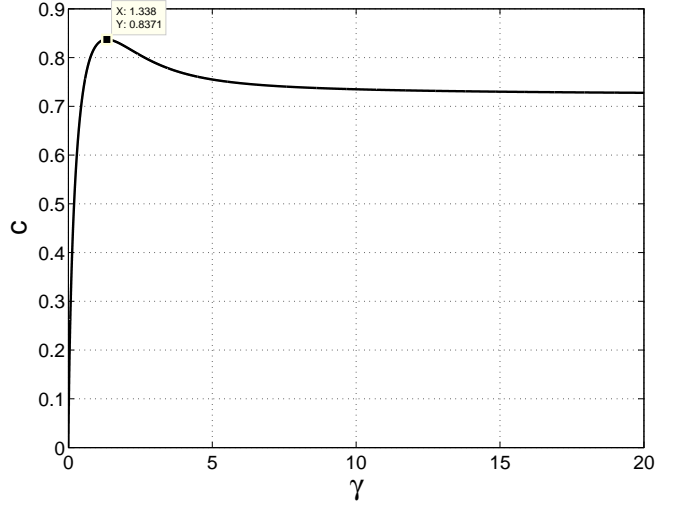}
\caption{The dependency  $c_{\text{uc}}^\text{unif}(\gamma)$}
\label{fig:unif}
\end{figure}

The dependency $c_{\text{uc}}^\text{unif}(\gamma)$ is shown in \Fig{fig:unif}. Let us introduce some notations
\begin{eqnarray*}
\gamma^* &=& \arg \max \limits_{\gamma} \left\{  c_{\text{uc}}^\text{unif}(\gamma) \right\}  = 1.3382\ldots \\
c_{\text{uc}}^* &=&  \max \limits_{\gamma} \left\{  c_{\text{uc}}^\text{unif}(\gamma) \right\} = 0.8371\ldots 
\end{eqnarray*}

Let us consider the case $\gamma \to \infty$.

\begin{statement}
\label{lim_sum_rate}
The equality follows
\[
\lim\limits_{\gamma \to \infty} \left\{ c_{\text{uc}}^\text{unif}(\gamma) \right\} = \frac{\log_2e}{2} = 0.7213\ldots
\]
\end{statement}
\begin{proof}
We need to use Lemma~\ref{lemma_limit} (see appendix).
\end{proof}

\subsubsection{Distorted distribution}

Let $S \geq \gamma^* (Q-1)$, we introduce the distorted distribution as follows
\[
\left\{ 
\begin{gathered}
  {p_1} = {p_2} =  \ldots  = {p_{Q - 1}} = \frac{{{\gamma ^*}}}{S} \hfill \\
  {p_Q} = 1 - (Q - 1)\frac{{{\gamma ^*}}}{S} \hfill \\ 
\end{gathered}  \right.
\]

\begin{statement}
Let $\gamma \geq \gamma^*$, then
\[
{c_{\text{uc}}(\gamma)} \geq  c_{\text{uc}}^*.
\]
\end{statement}
\begin{proof}
After substituting of the distorted distribution for (\ref{C_calc}) we obtain
\begin{flalign*}
&C_{\text{uc}}^\text{distort}(Q,S) \\
&= \gamma^* (Q-1) \sum\limits_{i=0}^{S-1}\left[ \binom{S-1}{i} \left(\frac{\gamma^*}{S}\right)^i \left(1-\frac{\gamma^*}{S}\right)^{S-1-i} \right.\\
& \times \left.\log_2 \left( \frac{i+1}{\gamma^*} \right) \right] \\
&+ S p_Q\sum\limits_{i=0}^{S-1}\left[ \binom{S-1}{i} \left(p_Q\right)^i \left(1-p_Q\right)^{S-1-i} \log_2 \left( \frac{i+1}{S p_Q} \right) \right].
\end{flalign*}

After dividing on $Q$ and taking the limit as $Q \to \infty$, we obtain
\begin{eqnarray*}
c_{\text{uc}}^\text{distort}(\gamma) &=& c_{\text{uc}}^* + f(\gamma),
\end{eqnarray*}
where                                                 
\begin{eqnarray*}
f(\gamma) &=& \lim\limits_{Q \to \infty} \left\{ \left( \gamma - \gamma^* \right) \sum\limits_{i=0}^{S-1}\left[ \binom{S-1}{i} \left(1 - \frac{\gamma^*}{\gamma}\right)^i \right.\right. \\
&\times& \left. \left. \left(\frac{\gamma^*}{\gamma}\right)^{S-1-i} \log_2 \left( \frac{i+1}{S \left( 1 - \frac{\gamma^*}{\gamma} \right)} \right) \right] \right\}.
\end{eqnarray*}

In accordance to Lemma~\ref{lemma_limit} (see appendix)
\[
f(\gamma) = \lim\limits_{Q \to \infty} \left\{\frac{\left( \gamma - \gamma^* \right)}{S}  \left( \frac{1}{2 \left( 1 - \frac{\gamma^*}{\gamma} \right)} + \frac{1}{2} \right) \log_2e\right\} = 0.
\]
\end{proof}

Thus we proved the following
\begin{theorem}
The inequality follows
\[
{c_{\text{uc}}(\gamma)} \geq c_{\text{uc}}^{\text{(L)}}(\gamma) =
\left\{ {\begin{array}{rc}
  c_{\text{uc}}^\text{unif}(\gamma), & \quad \gamma < \gamma^*, \\ 
  c_{\text{uc}}^*=0.8371...,      & \quad \gamma \geq \gamma^*. 
\end{array}} \right.
\]
\end{theorem}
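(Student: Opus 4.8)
The plan is to derive the claimed bound by simply combining the three Statements proved above, selecting in each range of $\gamma$ the input distribution that produces the larger value of $c_{\text{uc}}$, and then checking that the two contributions assemble into exactly the stated piecewise function.

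First, for $0 < \gamma < \gamma^*$ I would invoke the Statement on the uniform distribution $p_1 = p_2 = \ldots = p_Q = 1/Q$, which directly gives $c_{\text{uc}}(\gamma) \geq c_{\text{uc}}^\text{unif}(\gamma)$. This already matches the value claimed for $c_{\text{uc}}^{\text{(L)}}(\gamma)$ in this range, so no further argument is needed here. Second, for $\gamma \geq \gamma^*$ the feasibility condition $S = \gamma Q \geq \gamma^*(Q-1)$ of the distorted construction holds for all $Q$ large enough (equivalently $p_Q = 1-(Q-1)\gamma^*/S \geq 0$), so the Statement on the distorted distribution applies and yields $c_{\text{uc}}(\gamma) \geq c_{\text{uc}}^*$. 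Concatenating the two cases reproduces precisely $c_{\text{uc}}^{\text{(L)}}(\gamma)$.

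It is worth recording why the switch of distribution at $\gamma = \gamma^*$ is the right thing to do: by the very definition $c_{\text{uc}}^* = \max_\gamma c_{\text{uc}}^\text{unif}(\gamma)$ we have $c_{\text{uc}}^\text{unif}(\gamma) \leq c_{\text{uc}}^*$ for every $\gamma$, and in fact $c_{\text{uc}}^\text{unif}(\gamma) \to \tfrac12 \log_2 e = 0.7213\ldots < 0.8371\ldots = c_{\text{uc}}^*$ as $\gamma \to \infty$ by Statement~\ref{lim_sum_rate}. Hence beyond $\gamma^*$ the uniform bound degrades, whereas the distorted distribution — which essentially confines $\gamma^*(Q-1)$ of the ``load'' to $Q-1$ coordinates operating at the optimal ratio and dumps the remainder onto the last coordinate — keeps the per-subchannel rate pinned at $c_{\text{uc}}^*$, because the contribution $f(\gamma)$ of that last coordinate vanishes in the limit (this is exactly where Lemma~\ref{lemma_limit} on the asymptotics of the binomial-weighted sum is used, as in the proof of the distorted Statement).

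There is no substantive obstacle left at this stage: all the analytic work — converting $I(X;Y)$ into the binomial sum via Lemma~\ref{lemma_sum_multinomial}, and evaluating the $Q\to\infty$ limits via Lemma~\ref{lemma_limit} — was already carried out in the three preceding Statements, and the present theorem is only their packaging. The single point that deserves an explicit sentence in the write-up is the admissibility of the distorted distribution for $\gamma \geq \gamma^*$, i.e. $p_Q \geq 0$, which is immediate from $S \geq \gamma^*(Q-1)$; everything else is a direct citation of the two Statements followed by taking the pointwise maximum.
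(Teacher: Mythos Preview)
Your proposal is correct and matches the paper's approach exactly: the paper states the theorem with the preamble ``Thus we proved the following'' and gives no separate proof, treating it as the immediate packaging of the uniform-distribution Statement for $\gamma<\gamma^*$ and the distorted-distribution Statement for $\gamma\geq\gamma^*$. Your additional remarks on why the switch at $\gamma^*$ is optimal and on the admissibility $p_Q\geq 0$ are helpful elaborations but go beyond what the paper itself spells out.
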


\begin{figure}[t]
\centering
\includegraphics[width=0.48\textwidth]{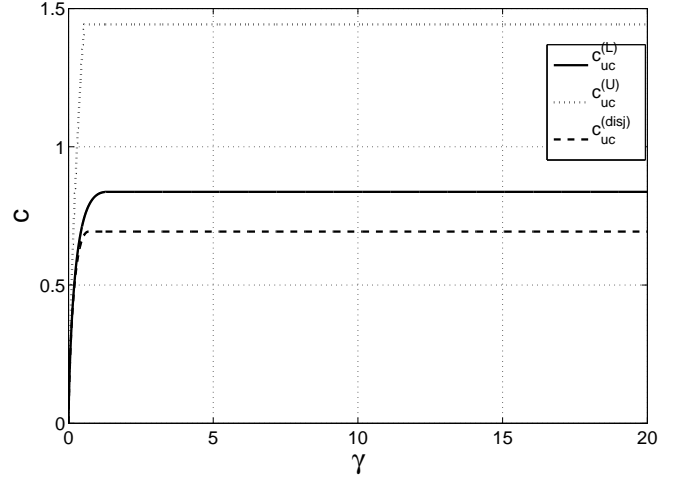}
\caption{Bounds on $c_{\text{uc}}(\gamma)$}
\label{fig:uncoord}
\end{figure}

In \Fig{fig:uncoord} the derived bounds $c^{\text{(L)}}_{\text{uc}}(\gamma)$ and $c^{\text{(U)}}_{\text{uc}}(\gamma)$ are shown, as for the coordinated case we add a lower bound $c^{\text{(disj)}}_{\text{uc}}(\gamma)$ on the capacity of the vector disjunctive channel (A-channel from \cite{CW}). The last bound is from \cite{WZ, G, VKK, VK}. One can see that the relative capacity for the uncoordinated case is upper bounded by a constant. We also note that the gain in comparison to the A-channel is not big.

\section{Conclusion}
Asymptotic (under the conditions $Q \to \infty$, $S = \gamma Q$ and $0 < \gamma < \infty$) upper and lower bounds on the relative (per subchannel) capacity are derived. The lower bound on the relative capacity for the coordinated case is shown to increase when $\gamma$ grows. At the same time the relative capacity for the uncoordinated case is upper bounded by a constant.

\section*{Acknowledgment}
We thank all the participants of the seminar on coding theory at IITP RAS, at which the paper was presented, and especially L.A.~Bassalygo and V.V.~Prelov. Without these people the paper would never have been written in the present form.






\appendix

\begin{lemma}\label{lemma_sum_multinomial}
Let $p_i \geq 0$, \: $i = 1, \ldots, Q$, $\sum\nolimits_{i=1}^{Q} p_i = 1$ and $f(\cdot)$ be any function, then
\begin{flalign*}
&\sum\limits_{m_1 + \ldots + m_Q = S} \left[ \binom{S}{m_1, m_2, \ldots, m_Q} p_1^{m_1} p_2^{m_2} \ldots p_Q^{m_Q} f(m_1)\right] \\
&= \sum\limits_{i = 0}^{S} \left[ \binom{S}{i} p_1^{i} (1-p_1)^{S-i} f(i)\right].
\end{flalign*}
\end{lemma}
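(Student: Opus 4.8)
The plan is to prove the identity by partitioning the summation range according to the value of the first index $m_1$ and then recognizing the resulting inner sum as an instance of the multinomial theorem. Since $f$ acts only on $m_1$, grouping the outer sum by $m_1 = i$ will isolate the factor $f(i)$ and leave a sum that no longer involves $f$ at all.

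First I would fix $m_1 = i$ for $i \in \{0, 1, \ldots, S\}$ and factor the multinomial coefficient using
\[
\binom{S}{m_1, m_2, \ldots, m_Q} = \binom{S}{m_1}\binom{S - m_1}{m_2, m_3, \ldots, m_Q},
\]
which is immediate from the factorial definition. Pulling out of the inner sum the quantities that depend on $i$ only — namely $\binom{S}{i}$, $p_1^{i}$ and $f(i)$ — rewrites the left-hand side as
\[
\sum_{i=0}^{S} \binom{S}{i} p_1^{i} f(i) \left[ \sum_{m_2 + \ldots + m_Q = S - i} \binom{S-i}{m_2, \ldots, m_Q} p_2^{m_2} \cdots p_Q^{m_Q} \right].
\]

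Next I would evaluate the bracketed inner sum by the multinomial theorem: it equals $(p_2 + p_3 + \cdots + p_Q)^{S-i}$, and by the hypothesis $\sum_{i=1}^{Q} p_i = 1$ this is $(1-p_1)^{S-i}$. Substituting back gives exactly the right-hand side. Equivalently, one may note that the left side is $\mathbb{E}[f(M_1)]$ for a multinomial random vector $(M_1,\ldots,M_Q)$ with parameters $S$ and $(p_1,\ldots,p_Q)$, whose first marginal $M_1$ is binomial with parameters $S$ and $p_1$, so the right side is the same expectation computed from that marginal.

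There is no genuine obstacle here; the identity is purely formal and holds term by term, so no convergence or positivity issues arise beyond $p_i \geq 0$. The only point requiring care is the index bookkeeping — ensuring the inner range $m_2 + \cdots + m_Q = S-i$ is inherited correctly and that the boundary cases $i = 0$ and $i = S$, where the inner sum collapses to a single term, are handled by the usual convention $0^0 = 1$.
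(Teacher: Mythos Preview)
Your proof is correct and follows essentially the same route as the paper: partition the outer sum by $m_1=i$, split the multinomial coefficient as $\binom{S}{i}\binom{S-i}{m_2,\ldots,m_Q}$, and collapse the inner sum to $(1-p_1)^{S-i}$ via the multinomial theorem. The probabilistic interpretation you add (marginal of a multinomial is binomial) is a nice aside but not needed for the argument.
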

\begin{proof}
\begin{flalign*}
& \sum\limits_{m_1 + \ldots + m_Q = S} \left[ \binom{S}{m_1, m_2, \ldots, m_Q} p_1^{m_1} p_2^{m_2} \ldots p_Q^{m_Q} f(m_1)\right]  \\
&= \sum\limits_{i=0}^{S}\sum\limits_{m_2 + \ldots + m_Q = S-i} \left[ \binom{S}{i, m_2, \ldots, m_Q} \right. \\
&\times \left. p_1^{i} p_2^{m_2} \ldots p_Q^{m_Q} f(i)\right] \\
&= \sum\limits_{i=0}^{S} \left\{  \binom{S}{i} p_1^{i} f(i) \right. \\
& \times \left. \sum\limits_{m_2 + \ldots + m_Q = S-i} \left[ \binom{S-i}{m_2, \ldots, m_Q}  p_2^{m_2} \ldots p_Q^{m_Q} \right] \right\} \\
&= \sum\limits_{i=0}^{S} \left\{  \binom{S}{i} p_1^{i} (1-p_1)^{S-i} f(i)  \right\}.
\end{flalign*}
\end{proof}

\begin{lemma}\label{lemma_limit}
Let $0 < p < 1$, $N \to \infty$ and $pN \to \infty$, then
\[
\lim\limits_{N \to \infty} \left\{ N \sum\limits_{i = 0}^{N} { \left[ \binom{N}{i} {{p}^i}{{\left( {1 - p} \right)}^{N - i}}{\ln}\left( \frac{i + 1}{pN} \right) \right]} \right\} = \frac{1}{2p} + \frac{1}{2}.
\]
\end{lemma}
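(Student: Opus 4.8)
The plan is to read the displayed sum as $N\,\mathbb{E}\!\left[\ln\frac{X+1}{pN}\right]$, where $X$ is Binomial$(N,p)$ with mean $\mu=pN$ and variance $\sigma^{2}=Np(1-p)$, and then to Taylor–expand the logarithm about its typical value. Put $Z=\frac{X-\mu+1}{\mu}$, so that $\frac{X+1}{pN}=1+Z$; because $pN\to\infty$ the fluctuations of $X$ around $\mu$ are of order $\sigma=o(\mu)$, so $Z$ is small with overwhelming probability, and $\ln(1+Z)=Z-\tfrac12 Z^{2}+R$ with $|R|\le C|Z|^{3}$ whenever $|Z|\le\tfrac12$.

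The computation is driven by two exact identities. From $\mathbb{E}[X-\mu]=0$ we get $\mathbb{E}[Z]=1/\mu$, hence $N\,\mathbb{E}[Z]=N/(pN)=1/p$ \emph{exactly}. From $\mathbb{E}[(X-\mu)^{2}]=\sigma^{2}$ we get $\mathbb{E}[Z^{2}]=(\sigma^{2}+1)/\mu^{2}$, hence $\tfrac{N}{2}\mathbb{E}[Z^{2}]=\tfrac{N(\sigma^{2}+1)}{2\mu^{2}}=\tfrac{1-p}{2p}+\tfrac{1}{2p^{2}N}\to\tfrac{1-p}{2p}$. If the remainder $N\,\mathbb{E}[R]$ were negligible this would already give $\tfrac1p-\tfrac{1-p}{2p}=\tfrac{1+p}{2p}=\tfrac1{2p}+\tfrac12$, the claimed value, so the whole proof reduces to controlling $R$ and to justifying the expansion in spite of the fact that $\ln\frac{i+1}{pN}$ is unbounded (it equals $-\ln(pN)$ at $i=0$).

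To handle this I would split the index set into a central block $A=\{\,i:\ |i-\mu|\le\delta_N\,\}$ and its complement, with $\delta_N$ chosen so that $\delta_N/\mu\to0$ (then on $A$ one has $|Z|\le(\delta_N+1)/\mu=:\varepsilon_N\to0$, which legitimizes $|R|\le C|Z|^{3}\le C\varepsilon_N Z^{2}$ there) while $\delta_N^{2}/\mu\to\infty$ fast enough to beat $\log N$; this is possible precisely because $pN\to\infty$. On $A^{c}$ I use the crude bound $\bigl|\ln\frac{i+1}{pN}\bigr|\le\ln(pN)+\ln\frac{N+1}{pN}=O(\log N)$ (valid since $1\le i+1\le N+1$) together with a Chernoff-type tail bound $\Pr(X\in A^{c})\le 2\exp(-c\,\delta_N^{2}/\mu)$; this makes the $A^{c}$–contribution of the sum negligible, and likewise makes the $A^{c}$–parts of $N\,\mathbb{E}[Z\mathbf{1}_{A}]$ and $\tfrac N2\mathbb{E}[Z^{2}\mathbf{1}_{A}]$ negligible (on $A^{c}$, $|Z|\le(N+1)/(pN)$ is bounded). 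Hence $N\,\mathbb{E}[Z\mathbf{1}_{A}]\to 1/p$, $\tfrac N2\mathbb{E}[Z^{2}\mathbf{1}_{A}]\to\tfrac{1-p}{2p}$, and $N\,\mathbb{E}[|R|\mathbf{1}_{A}]\le C\varepsilon_N\cdot N\,\mathbb{E}[Z^{2}]\to 0$; summing the three pieces gives the limit $\tfrac1{2p}+\tfrac12$.

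The main obstacle is exactly this uniform control of the lower tail: one must make sure that the exponentially small probability of $\{X\text{ far below }\mu\}$ overwhelms the (logarithmic, but unbounded) size of $\ln\frac{i+1}{pN}$ there, while the truncation threshold $\delta_N$ is simultaneously $o(\mu)$ — which is where the hypothesis $pN\to\infty$ is used. Everything else is the routine binomial–moment bookkeeping indicated above; in fact one may replace the remainder estimate by the even cruder $N\,\mathbb{E}[\,|Z|^{3}\,]=O(N^{-1/2})$, obtained from $\mathbb{E}\,|X-\mu|^{3}=O(N^{3/2})$, since $|R|\le C|Z|^{3}$ globally except on the negligible event where $Z$ is near $-1$ (i.e.\ $X$ near $0$), already absorbed into the $A^{c}$ analysis.
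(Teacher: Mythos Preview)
Your proposal is correct and follows essentially the same approach as the paper: both truncate to a central window around $\mu=pN$, dispose of the tails via a Chernoff bound, and on the window Taylor-expand $\ln(1+x)$ to second order with a cubic remainder, so that the first two binomial moments yield $1/p-(1-p)/(2p)=1/(2p)+1/2$. The only cosmetic difference is that the paper uses a fixed window $[(1-\varepsilon)\mu,(1+\varepsilon)\mu]$ and sends $\varepsilon\to0$ at the end, while you take a shrinking threshold $\delta_N$ with $\delta_N/\mu\to0$ and $\delta_N^2/\mu\to\infty$; your write-up is in fact a bit more explicit than the paper's about why the unbounded term $\ln\frac{1}{pN}$ at $i=0$ does not spoil the tail estimate.
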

\begin{proof}
Let us consider the function
\[
G(p,N) = N \sum\limits_{i = 0}^{N} { \left[ \binom{N}{i} {{p}^i}{{\left( {1 - p} \right)}^{N - i}}{\ln}\left( \frac{i + 1}{pN} \right) \right]}.
\]

Let $\mu = pN$, $\varepsilon$ is an arbitrarily small positive value, let us divide the sum into three parts:
\[
G(p,N) = S_1 + S_2 + S_3,
\]
where
\begin{eqnarray*}
S_1 &=& N \sum\limits_{i = 0}^{(1-\varepsilon)\mu} { \left[ \binom{N}{i} {{p}^i}{{\left( {1 - p} \right)}^{N - i}}{\ln}\left( \frac{i + 1}{\mu}  \right) \right]},\\
S_2 &=& N \sum\limits_{i = (1-\varepsilon)\mu}^{(1+\varepsilon)\mu} { \left[ \binom{N}{i} {{p}^i}{{\left( {1 - p} \right)}^{N - i}}{\ln }\left( \frac{i + 1}{\mu} \right) \right]},\\
S_3 &=& N \sum\limits_{i = (1+\varepsilon)\mu}^{N} { \left[ \binom{N}{i} {{p}^i}{{\left( {1 - p} \right)}^{N - i}}{\ln}\left( \frac{i + 1}{\mu} \right) \right]}.
\end{eqnarray*}

In accordance to the Chernoff bound
\[
\lim\limits_{N\to\infty}{S_1} = \lim\limits_{N\to\infty}{S_3} = 0,
\]
and we only need to work with $S_2$.

Let us apply the following inequalities for logarithm ($-\varepsilon \leq x \leq \varepsilon $)
\[
\underline{L}(x) = x - \frac{x^2}{2} - \frac{\varepsilon^3}{3 (1-\varepsilon)} \leq \ln(1+x) \leq x - \frac{x^2}{2} + \frac{x^3}{3} = \overline{L}(x). 
\]

Let us consider two functions
\[
\underline{S_2}(p, N) = N \sum\limits_{i = (1-\varepsilon)\mu}^{(1+\varepsilon)\mu} { \left[ \binom{N}{i} {{p}^i}{{\left( {1 - p} \right)}^{N - i}} \underline{L} \left( \frac{i + 1 - \mu}{\mu} \right) \right]}
\]
and
\[
\overline{S_2}(p, N) = N \sum\limits_{i = (1-\varepsilon)\mu}^{(1+\varepsilon)\mu} { \left[ \binom{N}{i} {{p}^i}{{\left( {1 - p} \right)}^{N - i}} \overline{L}\left( \frac{i + 1 - \mu}{\mu} \right) \right]},
\]
it is clear, that $\underline{S_2}(p, N) \leq S_2(p, N) \leq \overline{S_2}(p, N)$.

Consider $\lim\nolimits_{N\to\infty}{\overline{S_2}(p, N)}$. It is easy to check, that
\begin{flalign*}
&\lim\limits_{N\to\infty} \left\{ N \sum\limits_{i = (1-\varepsilon)\mu}^{(1+\varepsilon)\mu} { \left[ \binom{N}{i} {{p}^i}{{\left( {1 - p} \right)}^{N - i}}  \left( \frac{i + 1 - \mu}{\mu} \right) \right]} \right\} \\
&= \frac{1}{p}, \\
&\lim\limits_{N\to\infty} \left\{ N \sum\limits_{i = (1-\varepsilon)\mu}^{(1+\varepsilon)\mu} { \left[ \binom{N}{i} {{p}^i}{{\left( {1 - p} \right)}^{N - i}} \frac{1}{2} \left( \frac{i + 1 - \mu}{\mu} \right)^2 \right]} \right\} \\
&= \frac{1-p}{2p}, \\
&\lim\limits_{N\to\infty} \left\{ N \sum\limits_{i = (1-\varepsilon)\mu}^{(1+\varepsilon)\mu} { \left[ \binom{N}{i} {{p}^i}{{\left( {1 - p} \right)}^{N - i}}  \frac{1}{3} \left( \frac{i + 1 - \mu}{\mu} \right)^3 \right]} \right\} \\
&= 0.
\end{flalign*}

Thus, 
\[
\lim\nolimits_{N\to\infty}{\overline{S_2}(p, N)} = \frac{1}{2p} + \frac{1}{2}.
\]
Similarly 
\[
\lim\nolimits_{N\to\infty}{\underline{S_2}(p, N)} = \frac{1}{2p} + \frac{1}{2} - \frac{\varepsilon^3}{3 (1- \varepsilon)}.
\]

As $\varepsilon$ can be chosen arbitrarily small, then
\[
\lim\limits_{N\to\infty}{S_2(p, N)} = \frac{1}{2p} + \frac{1}{2}.
\] 
\end{proof}

\end{document}